\begin{document}

\title{Solution to the modified Helmholtz equation for arbitrary periodic charge densities}
\author{Miriam Hinzen}
\email[Corresponding author: ]{m.hinzen@fz-juelich.de}
\affiliation{Institute for Advanced Simulation, Forschungszentrum J\"ulich, 52425 J\"ulich, Germany}
\affiliation{J\"ulich Supercomputing Centre, Forschungszentrum J\"ulich, 52425 J\"ulich, Germany}
\affiliation{JARA-CSD, 52425 J\"ulich, Germany}
\affiliation{Peter-Gr\"unberg Institute, Forschungszentrum J\"ulich, 52425 J\"ulich, Germany}
\author{Edoardo Di Napoli}
\affiliation{Institute for Advanced Simulation, Forschungszentrum J\"ulich, 52425 J\"ulich, Germany}
\affiliation{J\"ulich Supercomputing Centre, Forschungszentrum J\"ulich, 52425 J\"ulich, Germany}
\affiliation{JARA-CSD, 52425 J\"ulich, Germany}
\author{Daniel Wortmann}
\author{Stefan Bl\"ugel}
\affiliation{Institute for Advanced Simulation, Forschungszentrum J\"ulich, 52425 J\"ulich, Germany}
\affiliation{JARA-CSD, 52425 J\"ulich, Germany}
\affiliation{Peter-Gr\"unberg Institute, Forschungszentrum J\"ulich, 52425 J\"ulich, Germany}
\date{\today}

\keywords{Partial differential equations, Density functional theory, Electronic structure methods, Green-functions technique, Materials science, Electrostatics, Fourier analysis, Muffin-tin approximation, Crystal lattices, Generalized functions}

\begin{abstract}
We present a general method for solving the modified Helmholtz equation without shape approximation for an arbitrary
periodic charge distribution, whose solution is known as the Yukawa potential or the screened Coulomb potential. The method is an extension of Weinert's pseudo-charge method [M.\ Weinert, J.\ Math.\ Phys.\ \textbf{22},\ 2433 (1981)] for solving the Poisson equation for the same class of charge density distributions. The inherent differences between the Poisson and the modified Helmholtz equation are in their respective radial solutions. These are polynomial functions, for the Poisson equation, and modified spherical Bessel functions, for the modified Helmholtz equation. This leads to a definition of a modified pseudo-charge density and modified multipole moments. We have shown that Weinert's convergence analysis of an absolutely and uniformly convergent Fourier series of the pseudo-charge density is transferred to the modified pseudo-charge density. We conclude by illustrating the algorithmic changes necessary to turn an available implementation of the Poisson solver into a solver for the modified Helmholtz equation.
\end{abstract}

\maketitle

\section{Introduction}

A variety of problems in condensed matter physics require an efficient solution of the partial differential equation
\begin{equation}\label{mod_Helmholtz_eq}
(\Delta-\lambda^2)\preV = -4\pi\rho \, ,
\end{equation}
for a charge density $\rho$ in a periodic domain. 
This equation is frequently referred to as the modified Helmholtz equation or the Yukawa equation. The latter name derives from the  Yukawa potential~\cite{yukawa1935interaction}, $\preV\propto\exp{(-\lambda r)}/r$, in nuclear physics, which is the underlying free-space Green function of \eqref{mod_Helmholtz_eq}. 
In the field of condensed matter, \textit{e.g.}\ in physics, chemistry, and biology,  the Yukawa potential is also known as the screened Coulomb potential. 
It typically emerges in cases when a many-body system of charged particles is treated in terms of an effective single-particle theory applying a mean-field approximation.
Then the many particles contribute to an effective screening of a  Coulomb interaction generated by the single, representative charged particle when treated in linear response theory. 

The relation between the bare Coulomb potential on the one hand and the screened Coulomb potential or the induced screening charge on the other hand is referred to as the dielectric constant or susceptibility, respectively.
Depending on the context, such relations appear in the  Debye–H\"uckel theory~\cite{Debye1923} in the form of a linearization of the Poisson–Boltzmann equation, where the Poisson equation describing the electrostatics of charged particles is a function of the charge density distribution obeying a Boltzmann statistics. Another example is the Thomas-Fermi model~\cite{Thomas1927, Fermi1927} of the dielectric constant in metals, which describes  the screening potential due to the linearized change of the electron distribution described by the Fermi-Dirac distribution with respect  to the spatial variations of the  electrostatic potential.  In these theories, the constant $\lambda$ represents the inverse of a typical length scale over which an individual charged particle exerts a notable effect.

The Thomas-Fermi theory can be regarded as a precursor of the density functional theory~\cite{hohenberg1964inhomogeneous} (DFT). 
The latter is the most important theory and methodology for the modeling and simulation of material properties of a crystalline solid based on the quantum mechanical treatment of many electron systems. 
In addition, the Thomas-Fermi theory provides a rough but fast approximation of the common density functionals, which relate the electron density to the effective Kohn-Sham potentials~\cite{kohn1965density}. 
Such a scheme makes the solution of equation \eqref{mod_Helmholtz_eq} particularly valuable. 
For instance it could be used  to obtain a good starting potential for the iterative solution of the Schr\"odinger-like Kohn-Sham equations, where the nuclear charge is included in $\rho$. 
Other examples are the attainment of an efficient approximate solution of the dielectric function, or the implementation of a hybrid functional~\cite{tran2011screened} to DFT using the Yukawa screening of the Hartree-Fock exchange. 
In both cases the charge density $\rho$ in~\eqref{mod_Helmholtz_eq} is replaced by an overlap charge density~\cite{Massidda:93} obtained as a product of wave functions associated with different quantum numbers. 

Although  most electronic structure methods implementing DFT applied to solid-state materials systems make explicit use of the underlying periodicity of the crystalline lattice, a straightforward solution of \eqref{mod_Helmholtz_eq} using Fourier transformation techniques is in general not possible due to the strongly oscillating charge density close to the nuclei. This problem is well discussed for the solution of the Poisson equation, $\Delta V = -4\pi\rho$, a limit of the modified Helmholtz equation for $\lambda=0$. 

In a seminal work, Weinert~\cite{weinert1981solution} proposed an elegant and numerically efficient  solution of the Poisson equation for periodic charges and corresponding electrostatic potentials without shape approximation. Weinert's solution, to which we refer here as Weinert's pseudo-charge method, is implemented (in several variants) in most full-potential all-electron DFT methods, such as the augmented spherical wave (ASW) method~\cite{eyert2013planewave}, the Korringa-Kohn-Rostoker Green function (KKR-GF) method~\cite{drittler1991thesis}, and the full-potential linearized augmented planewave (FLAPW)  method~\cite{wimmer1981fullpotential}, just to name a few. 

Typical to these all-electron DFT-methods is the domain decomposition into atomic spheres around the atoms and an interstitial region in-between. Weinert's pseudo-charge method is based on the observation that the relation between the charge density inside a sphere and its multipole expansion outside the sphere is not unique. A smooth Fourier transformable  pseudo-charge density with the same multipole moments as the true density is constructed. The latter provides the true potential through Fourier transformation of the Poisson equation and a subsequent solution of a Dirichlet boundary value problem on the sphere boundary. 

In this article, we extend Weinert's pseudo-charge method to the modified Helmholtz equation \eqref{mod_Helmholtz_eq} for values of $\lambda > 0$, and for general periodic charge densities without shape approximation. We formulate our new method for general charge densities, including continuous charge densities as for electron densities, discrete charge densities as for nuclear charges or more abstract densities that arise of products of wave functions. Such an approach is consistent with the real-space representation of the charge density and potential in all-electron methods. 

As a matter of choice, and motivated by the original work of Weinert~\cite{weinert1981solution}, we demonstrate this extension explicitly for the FLAPW method~\cite{wimmer1981fullpotential} as implemented in the \textsc{FLEUR} code~\cite{fleur}.
We provide a complete derivation of the modified multipole expansion using a Green function method, and the derivation of the interstitial charge density's modified multipole moments in the atomic spheres, using some Bessel function integration properties, which yields the coefficients of the pseudo-charge density. 
We also discuss the convergence of the Fourier series of the pseudo-charge density.  
We point out which are the algorithmic changes required to extend the solution of the Poisson equation to a solution of \eqref{mod_Helmholtz_eq}, which can then be straightforwardly transferred  to other all-electron full-potential band structure methods.

This paper is organized as follows: In Sect.~\ref{sec:WpschargeM}, we introduce the muffin-tin and interstitial region typical of the FLAPW method and the corresponding domain decomposition for the charge density and the potential.
We summarize the main statements of Weinert's pseudo-charge method and give a definition of the pseudo-charge density.  
Since we know the true charge density inside the muffin-tin spheres and with the assumption that we would know the interstitial potential, we construct in Sect~\ref{sec:MT-YP} the Yukawa potential inside the sphere by solving the Dirichlet boundary value problem. 
We develop two radial Green functions that are products of two linearly independent fundamental set solutions of the homogeneous radial modified Helmholtz equation. 
These Green functions are set apart by the boundary conditions they fulfill either at the muffin-tin sphere or in free-space. 
In Sect.~\ref{ssec:MMP-expansion}, the radial free-space Green function is used to define the modified multipole expansion of the Yukawa potential. 
In Sect.~\ref{sec:IYP}, we construct a pseudo-charge density in reciprocal space consistent with the modified Helmholtz equation by making use of the definition of the modified multiple moments put forward in Sect.~\ref{ssec:MMP-expansion}. 
We obtain the Yukawa potential for the interstitial region by solving the modified Helmholtz equation in Fourier space for the pseudo-charge density -- the solution is a simple algebraic expression. 
This is followed by an analysis of the convergence properties of the Fourier series of the pseudo-charge density. 
The entire algorithm that solves the modified Helmholtz equation is summarized in \ref{sec:Algorithm} together with the minimal modifications necessary to change Weinert's original algorithm. 
The conclusions and the outlook are presented in Sect.~\ref{sec:conclusion}.
\section{Yukawa Potential for a Muffin-Tin Decomposition of a 3D-periodic Domain}\label{sec:bulk}

\subsection{Weinert's Pseudo-Charge Method}
\label{sec:WpschargeM}

In order to deal with the $1/r$ singularities of the Coulomb potential due to the point-like charge of the nucleus and the associated rapid oscillations of the charge density in the vicinity of the singularity, in all-electron electronic structure methods the space is typically partitioned into muffin-tin spheres $\sphere$ of radius $R_\alpha$ centered around the atoms $\alpha$ -- the union of those is called the muffin-tin (MT) region -- and the interstitial region (I) between the atoms. In FLAPW both charge densities 
\begin{equation}\label{dual_representation} 
\rho(\rr) = \begin{cases} \sum_{\qq} \rhoI(\qq)\,\pw & \rr\in\text{I} \\ \sum_{L} \rho^\alpha_{L}(r_\alpha)\, \Ylm(\rha) & \rr=\ta+\ra\in\sphere \end{cases}
\end{equation}
and potentials
\begin{equation}\label{dual_rep_pot}
V(\rr) = \begin{cases} \sum_{\qq} V^\text{I}(\qq)\,\pw & \rr\in\text{I} \\ \sum_{L} V^\alpha_{L}(r_\alpha)\, \Ylm(\rha) & \rr=\ta+\ra\in\sphere \end{cases} 
\end{equation}
are represented in plane waves $\pw$, where $\qq$ defines the reciprocal lattice vector dual to the lattice vectors defining the periodic domain, and in spherical harmonics, $\Ylm$, of degree $\ell$ and order $m$, where $L$ is defined as $L:=(\ell,m)$.  $r_\alpha\le R_\alpha$ is the length
of the vector $\ra=\rr-\ta$, measured from the center of the atom $\alpha$ placed at position $\ta$ in the periodic domain, with  \mbox{$\rha=\frac{\rr-\ta}{|\rr-\ta|}$} its unit vector. 
The precision of the representation is determined by the cut-off parameters $K_\mathrm{max}$ for the wave vectors, $\qq$, with length $K\le K_\mathrm{max}$, and $\ell_\mathrm{max}$ for the degree $\ell$ in the angular-momentum expansion.
$\ell_\mathrm{max}$ sets also a  natural cut-off of the angular-momentum expansions of all other charge densities or multipole moments throughout the paper.

Weinert's pseudo-charge method for the Poisson equation is based on the  crucial observation that several charge densities $\rho$ inside a sphere $\sphere$ can generate the same multipole moments 
\begin{equation}
\qlm[\rho] = \int_{\sphereo}\rho(\ra+\ta) r_\alpha^{\ell} \Ylm^\ast(\rha)\diff\ra
\end{equation}
and thus, the same potential
\begin{equation}
V^{\text{I}}[\rho](\rr) = \sum_{L} \frac{4\pi}{2\ell+1}\qlm[\rho] \frac{1}{r_\alpha^{\ell +1}}\Ylm(\rha)
\end{equation}
outside the sphere. Here, $\Ylm^\ast$ denotes the complex conjugate of $\Ylm$.
The pseudo-charge density, $\pseudo$, defined by Weinert in Ref.~\onlinecite{weinert1981solution} is such a charge density. It fulfills the following three conditions:
\begin{itemize}
  \item It has the same multipole moments $\qlm[\rhot]=\qlm[\rho]$ in every sphere $\sphere$.
  \item It is equal to the true charge density $\rhoI$ in the interstitial region.   
  \item It has a fast convergent Fourier expansion.
\end{itemize}
The Fourier components of $V^\text{I}$ are then simply \begin{equation}
V^\text{I}(\qq)=\frac{4\pi}{K^2}\pseudo(\qq) \quad\text{for } \qq\neq\mathbf{0}\,,
\end{equation}
while $V^\text{I}(\mathbf{0})$ will be set to a constant.
Once the interstitial potential $V^\text{I}$ has been calculated,
the muffin-tin potential can be obtained by solving the Dirichlet boundary value problem on the sphere
\begin{align}\label{3D_MTPot}
V^\alpha(\ra+\ta) &= V_\text{S}^\alpha(\ra+\ta) + V_\text{B}^\alpha(\ra+\ta)\nonumber\\
 &= \int_{\sphereo}G(\ra,\rap)\rho(\rap+\ta)\diff\rap \\
&- \frac{R_\alpha^2}{4\pi}\int_{\partial\sphereo}V^\text{I}(\rap+\ta)\frac{\partial G}{\partial n'}(\ra,\rap)\diff\omega',\nonumber
\end{align}
where $G$ is a Green function associated with the solution of the Poisson equation,  $\diff\omega=\sin\theta\diff\theta\diff\phi$ denotes the solid angle element and $\rr=\ra+\ta\in\sphere$. 
Although the Green function depends on the muffin-tin radius $R_\alpha$, for simplicity we drop the index $\alpha$ in the Green function and in related quantities. 
The muffin-tin potential $V^\alpha$ is fed by two terms, a source term $V^\alpha_\text{S}$ due to the charge density distribution inside the sphere and a boundary term $V^\alpha_\text{B}$ due to the interstitial potential at the boundary of the sphere.
The Fourier coefficients of the pseudo-charge density basically have the form
\begin{equation}\label{rhot_Fourier}
\rhot(\qq)=\rhoI(\qq)+\sum_\alpha\rhoab(\qq)\,, 
\end{equation}
where $\rhoab$ is a Fourier transformable pseudo-charge density inside the muffin-tin sphere. 
The idea behind this is the following:
if the domain of definition of $\rhoI$ is formally expanded to the full space, \textit{i.e.}\ including the muffin-tin spheres, such that 
\begin{equation}
    \rhoI(\rr) = \sum_{\qq} \rhoI(\qq)\pw
\end{equation}
can also be evaluated for $\rr\in\cup_\alpha\sphere$, then the true charge $\rho$ can also be written as
\begin{equation}
\rho=\rhoI+\sum_\alpha\breve{\rho}^\alpha\,,
\end{equation}
where 
\begin{equation}\label{eq:def_charge-in-MT}
\breve{\rho}^\alpha:=\begin{cases}0&\text{in I}\\\rhoa-\rhoI&\text{in }\sphere\end{cases}
\end{equation}
are charge densities localized in the atomic spheres $\sphere$.
If these localized densities are now substituted by other localized densities $\rhoab$, then the charge density is still correct in I. 

We now continue with the derivation of the muffin-tin and interstitial potentials for $\lambda>0$.

\subsection{Muffin-Tin Yukawa Potential}\label{sec:MT-YP}

Assume the interstitial potential is obtained, then the Green function method is used to determine the screened Coulomb potential $\preV^\alpha$ inside the muffin-tin sphere $\sphere$ with centre $\ta$ through the solution of the boundary value problem
\begin{align}
(\Delta-\lambda^2)\preV^\alpha &= -4\pi\rho \hphantom{\preV^\text{I}}\quad\text{in }\sphere\label{MT_eq}\\
\preV^\alpha &= \preV^\text{I} \hphantom{-4\pi\rho}\quad\text{on }\partial\sphere \, .
\end{align}
The solution is divided into three steps, which we describe in the following. 
The derivation focuses on the construction of the Green function and its application.
For simplicity, we leave out the index $\lambda$ in Green functions and potentials in this subsection. 
The solution is given in terms of radial functions $V_{L}^\alpha(r_\alpha)$, the expansion coefficients  to the spherical harmonics expansion inside the sphere (see \eqref{dual_rep_pot}).

\noindent
\textit{Step 1.} We solve the homogeneous modified Helmholtz equation, $(\Delta -\lambda^2)U=0$, in spherical coordinates. 
Following the solution~\cite{abramowitz2008handbook} of the Laplace equation,  $\Delta \psi=0$, in spherical coordinates $(r,\rh)$, the homogeneous potential can be factorized into products of radial functions $u_\ell(r)$ and  angular functions $\Ylm(\rh)$, $U(r,\rh)=\sum_L u_\ell(r)\Ylm(\rh)$. 
The term $-\lambda^2 U$ in the homogeneous modified Helmholtz equation only has an effect on the radial solution. The spherical harmonics, $\Ylm$, are the eigensolutions of the angular part of the Laplace equation with eigenvalues $\ell(\ell+1)$. 
The radial part of the homogeneous modified Helmholtz equation is known as the modified spherical Bessel differential equation~\cite{abramowitz2008handbook,arfken2013mathematical}, 
\begin{equation}\label{mod_sph_bessel_diff_eq}
\frac{\diff^2 u_{\ell}(r)}{\diff r^2} + \frac{2}{r}\frac{\diff u_{\ell}(r)}{\diff r} - \left( \frac{\ell(\ell+1)}{r^2} + \lambda^2 \right ) u_{\ell}(r) = 0,
\end{equation}
and its fundamental set of solutions
are for each $\ell$ the two modified spherical Bessel functions~\cite{abramowitz2008handbook,arfken2013mathematical} $i_{\ell}(\lambda r)$ and $k_{\ell}(\lambda r)$, the first of which is the regular solution well-defined at the origin, but grows fast with growing radius $r$ and the second is the irregular solution that goes to infinity for $r\to 0$.
To realize the proper boundary condition for the radial Green functions two conditions have to be fulfilled:
The first solution, $u_{\ell 1}$, must be finite at $r=0$. 
We conclude that 
\begin{equation}\label{condition_u1}
u_{\ell 1}(r)=i_{\ell}(\lambda r)\,,
\end{equation}
since $k_{\ell}(\lambda r)\rightarrow\infty$ for $r\to0$. The second solution, $u_{\ell 2}$, must be 0 at $r=R_\alpha$. 
This is achieved by a linear combination of the two modified spherical Bessel functions,
\begin{equation}\label{condition_u2}
u_{\ell 2}(r) = k_{\ell}(\lambda r) - i_{\ell}(\lambda r) \frac{k_{\ell}(\lambda R_\alpha)}{i_{\ell}(\lambda R_\alpha)}\,.
\end{equation}

\noindent
\textit{Step 2. } A function $G_l\in C_0([0,R_\alpha]\times[0,R_\alpha])\cap C_2([0,R_\alpha]\times[0,R_\alpha]\backslash\{(r,r)\vert r\in[0,R_\alpha]\})$ is called a radial Green function, if it is the solution to
\begin{equation}\label{def_sph_green}
(\Delta_r-\lambda^2) G_{\ell}(r,r') = -\frac{4\pi}{r^2}\delta(r-r') 
\end{equation}
subject to the Dirichlet boundary condition
\begin{equation}
G_{\ell}(r,r')=0\,, \qquad\text{if}\,\,\> r=R_\alpha \quad\text{or}\quad r'=R_\alpha\, ,
\end{equation}
where $(\Delta_r-\lambda^2)$ is the linear radial differential operator in~\eqref{mod_sph_bessel_diff_eq} and $\delta$ denotes the radial Dirac delta function, for which
\begin{equation}
\int_a^b\delta(r-r')f(r')\diff r' = \begin{cases} f(r), & \text{if } r\in[a,b]\\ 0, & \text{otherwise. }\end{cases}
\end{equation}
The radial Green function takes the form of the product of the two linearly independent solutions 
with the proper boundary conditions,
\begin{equation}\label{sph_green}
G_{\ell}(r,r') = C u_{\ell 1}(r_<)u_{\ell 2}(r_>)\,,
\end{equation}
where $r_<=\min(r,r')$, $r_>=\max(r,r')$ and 
\begin{equation}
C=-\frac{4\pi}{r'^2}W^{-1}\left[u_{\ell 1}(r'),u_{\ell 2}(r')\right]\,.
\end{equation}
Since the Wronskian $W$ is linear, the addition of $cu_{\ell 1}$ ($c=$const) onto $k_{\ell}(\lambda r)$ to suffice the boundary condition $u_{\ell 2}(R_\alpha)=0$ has no influence on the Wronskian: $W(u_{\ell 1},cu_{\ell 1})=0$, with $W(i_{\ell},k_{\ell})$ remaining.
To calculate the Wronskian of $i_{\ell}(\lambda r')$ and $k_{\ell}(\lambda r')$ one can either Taylor expand the two functions or simply take the limiting values for $r\to0$ or the asymptotic values for $r\to\infty$ to find
\begin{align}\label{wronskian}
W\left[u_{\ell 1}(r^\prime),u_{\ell 2}(r^\prime)\right] &= u_{\ell 1}(r^\prime)\frac{\diff u_{\ell 2}(r^\prime)}{\diff r^\prime}-\frac{\diff u_{\ell 1}(r^\prime)}{\diff r^\prime}u_{\ell 2}(r^\prime) \nonumber\\&= -\frac{1}{\lambda r^{\prime2}}\,.
\end{align}
Therefore, $C  = 4\pi\lambda$, and the radial Green function finally reads  
\begin{align}\label{sph_greenfct}
G_{\ell}(r,r')&= 4\pi\lambda\,i_{\ell }(\lambda r_<)k_{\ell }(\lambda r_>)\nonumber \\
& - 4\pi\lambda\,i_{\ell }(\lambda r)  i_{\ell}(\lambda r^\prime) \frac{k_{\ell}(\lambda R_\alpha)}{i_{\ell}(\lambda R_\alpha)}\, .
\end{align}

\noindent
\textit{Step 3.} Considering the standard expression of Dirac's delta function separated according to the radial and angular coordinates 
\begin{align}
\delta(\rr-\rp) &=\frac{1}{r^2}\delta(r-r^\prime)\delta(\rh-\rhp) \nonumber \\
&=\frac{1}{r^2}\delta(r-r^\prime)\sum_L\Ylm^*(\rhp)\Ylm(\rh)\,,
\end{align}
the three-dimensional (3D) Green function $G(\rr,\rp)$ solving 
\begin{align}
(\Delta-\lambda^2)G(\rr,\rp) &= -4\pi\delta(\rr-\rp) \hphantom{0i}\text{in }\sphereo\,,\\
G(\rr,\rp) &= 0 \hphantom{-4\pi\delta(\rr-\rp)}\text{on }\partial\sphereo\,,
\end{align}
is expanded in the form 
\begin{equation}\label{3D_Green}
G(\rr,\rp) =\sum_{L}G_{\ell}(r,r')\Ylm^\ast(\rhp)\Ylm(\rh)
\end{equation}
and the solution to the inhomogeneous equation~\eqref{MT_eq} is given by \eqref{3D_MTPot}.
For the derivation of the 3D Green function and the 3D inhomogeneous solution $V^\alpha=V_\text{S}^\alpha+V_\text{B}^\alpha$~\eqref{3D_MTPot} we refer the reader to Ref.~\onlinecite{jackson1999classical}.
Both the integral over the sphere $\sphereo$ and the boundary integral simplify by exploiting the orthonormality relation of the spherical harmonics, $\int_{\partial B_1(\nullvec)}\Ylm^\ast(\rh) Y_{L^\prime}(\rh)\diff\omega=\delta_{LL^\prime}$. 
The integral over the sphere $\sphereo$ provides the source contribution to the muffin-tin potential
\begin{align}
V_\text{S}^\alpha(\ra+\ta)&=\int_{\sphereo}G(\ra,\rap)\rho(\rap+\ta)\diff\rap \nonumber\\
&= \sum_{L}\left[\int_0^{R_\alpha} G_{\ell}(r_\alpha,r_\alpha^\prime)\rho_{L}^\alpha(r_\alpha^\prime)r_\alpha^{\prime 2}\diff r_\alpha^\prime\right] \Ylm(\rha)\,.
\end{align}

In order to obtain the boundary contribution to the muffin-tin potential, we evaluate the boundary integral in~\eqref{3D_MTPot} by expanding the interstitial potential $V^\text{I}(\qq)$ (see Sect.~\ref{sec:IYP}) on the sphere boundaries $\partial\sphere\ni\rp$ in spherical coordinates
\begin{align}
V^\text{I}(\rap+\ta) 
&= \sum_{\qq} V^\text{I}(\qq) \me^{\iu\qq\cdot\ta} \me^{\iu\qq\cdot\rap} \nonumber\\
&= \sum_{L} \VIR \Ylm(\rhap)
\end{align}
using the plane-wave expansion 
\begin{equation}\label{pw_expansion}
\me^{\iu\qq\cdot\rr} = \sum_{L}4\pi\iu^{\ell}j_{\ell}(Kr)\Ylm^\ast(\hat{\qq})\Ylm(\rh)\,,
\end{equation}
where 
\begin{equation}\label{spherical_harmonics_boundary_term}
\VIR = 4\pi\iu^{\ell} \sum_{\qq} V^\text{I}(\qq) \me^{\iu\qq\cdot\ta}j_{\ell}(KR_\alpha) \Ylm^\ast(\hat{\qq})\,.
\end{equation}
Furthermore, the normal derivative of $G$ on the sphere boundary is
\begin{align}
\frac{\partial G}{\partial n^\prime}(\ra,\rap) &= \frac{\partial G(\ra,\rap)}{\partial r_\alpha^\prime}\bigg\vert_{r_\alpha^\prime=R_\alpha} \\
&= \sum_{L} \frac{\partial G_{\ell}(r_\alpha,r_\alpha^\prime)}{\partial r_\alpha^\prime}\bigg\vert_{r_\alpha^\prime=R_\alpha}\Ylm^\ast(\rhap) \Ylm(\rha)\,.\nonumber
\end{align}
Since $r_\alpha<R_\alpha=r_\alpha^\prime$ and since $G_{\ell}$ takes the form~\eqref{sph_green}, we obtain 
\begin{equation}
\frac{\partial G_{\ell}(r_\alpha,r_\alpha^\prime)}{\partial r_\alpha^\prime}\bigg\vert_{r_\alpha^\prime=R_\alpha} = 4\pi\lambda\, u_{\ell 1}(r_\alpha) u_{\ell 2}^\prime(R_\alpha)\,.
\end{equation}
We recall that $u_{\ell 2}(R_\alpha)=0$ and  reuse~\eqref{wronskian} to obtain
\begin{equation}
u_{\ell 2}^\prime(R_\alpha) = -\frac{1}{\lambda R_\alpha^2 i_{\ell}(\lambda R_\alpha)}\,,
\end{equation}
yielding
\begin{equation}
\frac{\partial G_{\ell}(r_\alpha,r_\alpha^\prime)}{\partial r_\alpha^\prime}\bigg\vert_{r_\alpha^\prime=R_\alpha} = -\frac{4\pi}{R_\alpha^2} \frac{i_{\ell}(\lambda r_\alpha)}{i_{\ell}(\lambda R_\alpha)}\,.
\end{equation}
With this and the knowledge of the interstitial potential at the sphere boundary from~\eqref{spherical_harmonics_boundary_term}, the boundary contribution to the muffin-tin potential becomes
\begin{align}
V^\alpha_\text{B}(\ra+\ta)&=-\frac{R_\alpha^2}{4\pi}\int_{\partial\sphereo}V^\text{I}(\rap+\ta)\frac{\partial G}{\partial n'}(\ra,\rap)\diff\omega' \nonumber\\
&= \sum_{L} \VIR \frac{i_{\ell}(\lambda r_\alpha)}{i_{\ell}(\lambda R_\alpha)} \Ylm(\rha)
\end{align}
and the radial part of the spherical harmonics expansion of the total potential, $V^\alpha_\text{S}+V^\alpha_\text{B}$, in the sphere $\sphereo$ becomes 
\begin{equation}\label{potential_sphere_algorithm}
\begin{split}
V_{L}^\alpha(r_\alpha) &= \int_0^{R_\alpha} G_{\ell}^\alpha(r_\alpha,r_\alpha^\prime) \rho_{L}^\alpha(r_\alpha^\prime) r_\alpha^{\prime 2} \diff r_\alpha^\prime \\
&+ \VIR \frac{i_{\ell}(\lambda r_\alpha)}{i_{\ell}(\lambda R_\alpha)}\,.
\end{split}
\end{equation}
Due to the kink of $G_{\ell}$ at $r_\alpha=r'_\alpha$, for practical calculations the integral is split in a part where $r'_\alpha<r_\alpha$, a part where $r'_\alpha>r_\alpha$ and a third part where the integrand is symmetric in $r_\alpha$ and $r'_\alpha$:
\begin{align}\label{computable_MTPot}
V_{L}^\alpha(r_\alpha) &= 4\pi\lambda \left( \left[\int_0^{r_\alpha} \rho_{L}^\alpha(r'_\alpha)i_{\ell}(\lambda r'_\alpha) r_\alpha^{\prime 2} \diff r'_\alpha\right] k_{\ell}(\lambda r_\alpha) \right.\nonumber\\
&+ \left. \left[\int_{r_\alpha}^{R_\alpha} \rho_{L}^\alpha(r'_\alpha)k_{\ell}(\lambda r'_\alpha) r_\alpha^{\prime 2} \diff r'_\alpha\right] i_{\ell}(\lambda r_\alpha) \right. \nonumber\\
&-\left.  \left[\int_0^{R_\alpha} \rho_{L}^\alpha(r'_\alpha)i_{\ell}(\lambda r'_\alpha) r_\alpha^{\prime 2} \diff r'_\alpha\right] i_{\ell}(\lambda r_\alpha)\frac{k_{\ell}(\lambda R_\alpha)}{i_{\ell}(\lambda R_\alpha)} \right) \nonumber\\
&+ \VIR \frac{i_{\ell}(\lambda r_\alpha)}{i_{\ell}(\lambda R_\alpha)}\,. 
\end{align}

\subsubsection{Modified Multipole Expansion}\label{ssec:MMP-expansion}

In the same way we obtain the radial representation of the free-space Green function, well-known as the Yukawa potential for a Dirac test charge at $\rp$, 
\begin{equation}\label{eq:free-space-GF}
    \frac{\me^{-\lambda |\rr-\rp|}}{|\rr-\rp|} = 4\pi\lambda\sum_L  i_\ell(\lambda r_<) k_\ell(\lambda r_>) \Ylm^\ast(\rhp) \Ylm(\rh)\, .
\end{equation}
The modified spherical Bessel function $k_\ell$ already contains the proper boundary condition for $r\rightarrow\infty$. 
A charge density $\breve{\rho}^\alpha$ localized in a sphere $\sphere$ embedded in free space, produces a Yukawa potential outside the sphere, \textit{i.e.}\ $\rr=\ra+\ta\notin\sphere$, 
\begin{equation}\label{eq:Yukawa_x_charge}
V^{\text{I}}[\breve{\rho}^\alpha](\rr)= \int_{\mathbb{R}^3} G(\ra,\rap)\, \breve{\rho}^\alpha(\rap+\ta)\diff\rap\, ,
\end{equation}
which can be expressed analogously to the Coulomb potential in terms of the modified multipole expansion
\begin{equation}\label{eq:V_alpha-I}
V^{\text{I}}[\breve{\rho}^\alpha](\rr) = \sum_{L} \frac{4\pi\lambda^{\ell +1}}{(2\ell+1)!!}\qlm[\breve{\rho}^\alpha]\, k_{\ell}(\lambda r_\alpha)\,\Ylm(\rha)\,,
\end{equation}
with the modified multipole moments
\begin{equation}\label{mod_mpm}
\begin{split}
\qlm[\breve{\rho}^\alpha] = \frac{(2\ell+1)!!}{\lambda^\ell}\int_{\sphereo}&\breve{\rho}^\alpha(\ra+\ta)\\ & i_{\ell}(\lambda r_\alpha)\Ylm^\ast(\rha)\diff\ra\, .
\end{split}
\end{equation}
An analogous definition holds true for the modified multiple moments $\qlm[\rhoa]$ of the true charge $\rhoa$ in the sphere.
With the standard expansion of the charge density inside the sphere into spherical harmonics~\eqref{dual_rep_pot}, $\rhoa(\ra+\ta) =\sum_L \rhoa_L(r_\alpha)\,\Ylm(\rha)$ and the application of their orthonormality relation, the calculation of the modified multipole moments inside the muffin-tin spheres is straightforward and results in
\begin{equation}\label{MTqlm}
\qlm[\rhoa] = \frac{(2\ell+1)!!}{\lambda^{\ell}} \int_0^{R_\alpha} \rhoa_{L}(r_\alpha)\, i_{\ell}(\lambda r_\alpha)\, r_\alpha^{2} \diff r_\alpha\, .
\end{equation}
The summation of $V^{\text{I}}[\breve{\rho}^\alpha]$ over all spheres $\alpha$ finally provides the contributions of the charges of all spheres to the interstitial potential. Equation \eqref{3D_MTPot} reduces to \eqref{eq:Yukawa_x_charge}, since for the free-space Green function \eqref{eq:free-space-GF} the boundary value term disappears for $r\rightarrow \infty$.
\subsection{Interstitial Yukawa Potential}\label{sec:IYP}

Suppose we had found a Fourier transformable pseudo-charge density $\rhoab$ inside the sphere consistent with the Yukawa potential produced outside the sphere, with coefficients $\rhoab(\qq)$, and the Fourier series would converge rapidly throughout the periodic domain. 
Then we can find the Fourier coefficients of the pseudo-charge density, $\rhot(\qq)$, by \eqref{rhot_Fourier} and the solution of the modified Helmholtz equation \eqref{mod_Helmholtz_eq} through Fourier transformation yields an algebraic equation from which we calculate the interstitial Yukawa potential,
\begin{equation}\label{preVI_Fourier}
\preV^\text{I}(\qq) = \frac{4\pi}{K^2+\lambda^2}\rhot(\qq)\, .
\end{equation} 
In the previous subsection~\ref{sec:MT-YP} the interstitial Yukawa potential is used as boundary values for the Yukawa potential in the atomic spheres. 

This subsection is concerned with the construction of the Fourier transformable pseudo-charge density $\rhoab$ that replaces the true local charge density $\breve{\rho}^\alpha$ (see \eqref{eq:def_charge-in-MT}) inside the muffin-tin sphere such that the Yukawa potential in the interstitial region,  $V_\lambda^{\text{I}}[\breve{\rho}^\alpha]=V_\lambda^{\text{I}}[{\rho}^\alpha]- V_\lambda^{\text{I}}[\rhoI]$, due to the true charge density inside the sphere, is equal to the Yukawa potential in the interstitial region produced by the \textit{a priori} unknown pseudo-charge density, $V_\lambda^{\text{I}}[\rhoab]=V_\lambda^{\text{I}}[\breve{\rho}^\alpha]$.
From~\eqref{eq:V_alpha-I} we conclude that this is fulfilled if the modified multipole moments~\eqref{mod_mpm} of both charge densities, $\qlm[\rholoc]=\qlm[\rhoa]-\qlm[\rhoI]$ and $\qlm[\rhoab]$, are equal. 
The modified multiple moments $\qlm[\rho^\alpha]$  are already known through \eqref{MTqlm}. Next we determine the modified multiple moments $\qlm[\rhoI]$ of the interstitial charge extended into the muffin-tin spheres and then construct the pseudo-charge density. 

\subsubsection{Modified Multiple Moments of Interstitial Charge Density Extended into Sphere} \label{ssec:MMMICh}
The determination of the modified multiple moments of the interstitial charge density is in principle the same as in Ref.~\onlinecite{weinert1981solution}, but since the modified multipole moments are different from the known multipole moments for the Coulomb potential, we go through this step of deriving $\qlm[\rhoI]$ in detail.
We write $\rhoI$ relative to the sphere centre $\ta$ and employ the Rayleigh expansion~\eqref{pw_expansion} to $\me^{\iu\qq\cdot\ra}$, which yields
\begin{align}
\rhoI(\rr) &=\sum_\qq \rhoI(\qq)\me^{\iu\qq\cdot\ra}\me^{\iu\qq\cdot\ta}\\ &=\sum_\qq \rhoI(\qq)\me^{\iu\qq\cdot\ta}\sum_{L}4\pi\iu^{\ell} j_{\ell}(Kr_\alpha)\Ylm^\ast(\hat{\qq})\Ylm(\rha)\,.\nonumber
\end{align}
The modified multipole moments of $\rhoI$ in the sphere $\sphere$, defined analogously to~\eqref{mod_mpm}, are
\begin{align}
\qlm[\rhoI] &= \frac{(2\ell+1)!!}{\lambda^{\ell}} \int_{\sphereo} \Ylm^\ast(\rha) i_{\ell}(\lambda r_\alpha) \rhoI(\ra+\ta)\diff\ra \nonumber\\
&= \frac{(2\ell+1)!!}{\lambda^{\ell}} 4\pi \iu^{\ell} \sum_\qq\rhoI(\qq)\me^{\iu\qq\cdot\ta}\Ylm^\ast(\hat{\qq}) \\
&\phantom{\frac{(2\ell+1)!!}{\lambda^{\ell}} 4\pi \iu^{\ell} \sum_\qq\rhoI}\int_0^{R_\alpha} i_{\ell}(\lambda r_\alpha) j_{\ell}(Kr_\alpha) r_\alpha^2 \diff r_\alpha\,. \nonumber
\end{align}
For $\qq \ne 0$, the latter integral becomes
\begin{align}
&\int_0^{R_\alpha} i_{\ell}(\lambda r_\alpha) j_{\ell}(K r_\alpha) r_\alpha^2 \diff r_\alpha \nonumber\\
&= \frac{R_\alpha^2}{K^2+\lambda^2}\left(K i_{\ell}(\lambda R_\alpha)j_{\ell +1}(K R_\alpha)+\lambda i_{\ell +1}(\lambda R_\alpha)j_{\ell}(K R_\alpha)\right)\nonumber\\
&= \frac{R_\alpha^2}{K^2+\lambda^2}\left(\lambda i_{\ell -1}(\lambda R_\alpha)j_{\ell}(K R_\alpha)- K i_{\ell}(\lambda R_\alpha)j_{\ell -1}(K R_\alpha)\right)\, .
\end{align}
If $\qq = 0$, observe that $j_{\ell}(0)=\delta_{\ell 0}$ (Kronecker $\delta$) and so the integral becomes 
\begin{equation}
\delta_{\ell 0} \int_0^{R_\alpha} i_0(\lambda r_\alpha) r_\alpha^2 \diff r_\alpha = R_\alpha^3\frac{ i_1(\lambda R_\alpha)}{\lambda R_\alpha} \delta_{\ell 0}\, .
\end{equation}
Both equations above can be derived by partial integration in two different ways and applying the identities in Ref.~\onlinecite{arfken2013mathematical}, 
\begin{equation}
\frac{\diff}{\diff r}\left(r^{-\ell}f_{\ell}(r)\right) = \pm r^{-\ell}f_{\ell +1}(r)
\end{equation}
for $f_{\ell}=i_{\ell}$ with the plus sign and for $f_{\ell}=j_{\ell}$ with the minus sign, and
\begin{equation}\label{bessel_identity}
\frac{\diff}{\diff r}\left(r^{\ell +2}f_{\ell +1}(r)\right) = r^{\ell +2}f_{\ell}(r)
\end{equation}
for both $f_{\ell}=i_{\ell}$ and $f_{\ell}=j_{\ell}$.
In conclusion, this yields
\begin{flalign}\label{Iqlm}
\qlm[\rhoI] &= \delta_{\ell 0}\sqrt{4\pi}\frac{R_\alpha^2i_1(\lambda R_\alpha)}{\lambda}\rhoI(\mathbf{0}) \nonumber\\
&+ \sum_{\qq\ne \mathbf{0}}\frac{(2\ell+1)!!}{\lambda^{\ell}} 4\pi\iu^{\ell}\rhoI(\qq)\me^{\iu\qq\cdot\ta}\Ylm^\ast(\hat{\qq}) \frac{R_\alpha^2}{\lambda^2+K^2}\nonumber\\
&\quad(K i_{\ell}(\lambda R_\alpha)j_{\ell +1}(KR_\alpha)+\lambda i_{\ell +1}(\lambda R_\alpha)j_{\ell}(KR_\alpha))\,.
\end{flalign}

\subsubsection{Construction of Pseudo-Charge Density}\label{ssec:cPseudoCD}

We construct the pseudo-charge density by following the Ansatz of Weinert,
\begin{align}\label{ansatz_rhoab}
\rhoab(\ra+\ta) &= \sum_{L}\rhoab_{L}(r_\alpha)\Ylm(\rha)\nonumber\\&=\sum_{L} Q_{Ln}^\alpha \left(\sum_{\eta=0}^n a_\eta r_\alpha^{\nu_\eta}\right) \Ylm(\rha)\,,
\end{align}
in which the radial dependence of the charge density is expressed in terms of a polynomial expansion up to degree $\nu_n$, which depends on atom  $\alpha$ and angular degree $\ell$, and otherwise use spherical harmonics for the angular part---this being the usual representation for charge densities in the muffin-tin region. 
As we will discuss in Sect.~\ref{ssec:fourierconvergence}, it is beneficial to choose
\begin{equation}\label{coefficients_a_eta}
a_{\eta}=(-1)^{n-\eta}R_\alpha^{2(n-\eta)}\binom{n}{\eta}a_n \quad \text{for } \eta=0,\ldots,n
\end{equation}
and $\nu_\eta=\ell +2\eta$, where $n$ is yet to be determined.
As will become apparent later when we derive the coefficients $Q_{Ln}^\alpha$ in~\eqref{coeff_QL}, the coefficient $a_n$ cancels out in any relevant equation, like~\eqref{reduced_rhoab} or~\eqref{rhoab_Fourier}.
With these choices of parameters and with the binomials theorem applied to
\begin{equation}
\sum_{\eta=0}^n (-1)^{n-\eta} \binom{n}{\eta} \left(\frac{r_\alpha}{R_\alpha}\right)^{2\eta} = \left(\left(\frac{r_\alpha}{R_\alpha}\right)^2-1\right)^{n}
\end{equation}
it follows from the ansatz~\eqref{ansatz_rhoab}
\begin{equation}\label{reduced_rhoab}
\rhoab(\ra+\ta) 
= a_n (r_\alpha^2-R_\alpha^2)^n \sum_L Q_{Ln}^\alpha r_\alpha^\ell \Ylm(\rha)\, .
\end{equation}
The Fourier transform of this expression is then given by
\begin{align}\label{rhoab_Fourier}
\rhoab(\qq) &= \frac{1}{|\Omega|} \me^{-\iu\qq\cdot\ta} \int_{\sphereo} \rhoab(\ra+\ta) \me^{-\iu \qq\cdot\ra} \diff\ra \nonumber\\
&=\frac{ 4\pi}{|\Omega|} \me^{-\iu\qq\cdot\ta} \sum_{L} (-i)^{\ell} Q_{Ln}^\alpha A_{\ell n}^\alpha(K) \Ylm(\hat{\qq})\, ,
\end{align}
where  
\begin{equation}\label{coeffdef_Al}
A_{\ell n}^\alpha(K) = a_n \int_0^{R_\alpha}(r_\alpha^2-R_\alpha^2)^n r_\alpha^{\ell+2} j_\ell(Kr_\alpha) \diff r_\alpha\, ,
\end{equation}
$|\Omega|$ is the volume of the periodic domain and we used the Rayleigh expansion~\eqref{pw_expansion} and the orthonormality relation of the spherical harmonics.
With Prop.~\ref{app:1} in Appendix~\ref{app}
$A_{\ell n}^\alpha(K)$ finally reduces to 
\begin{equation}\label{coeff_Al}
A_{\ell n}^\alpha(K) = a_n (-2)^n n! R_\alpha^{\ell +n+2} \frac{j_{\ell +n+1}(KR_\alpha)}{K^{n+1}}\, .
\end{equation}
In the same way we derive the coefficients $Q_{Ln}^\alpha$: 
We insert~\eqref{reduced_rhoab} in the definition of the modified multipole moments~\eqref{mod_mpm} and use~\eqref{bessel_integration_property} for $f_\ell=i_\ell$ and $\kappa=\lambda$ to obtain
\begin{align}
\qlm[\rhoab] 
&= \frac{(2\ell+1)!!}{\lambda^{\ell}} \int_{\sphereo} \rhoab(\ra+\ta) i_\ell(\lambda r_\alpha) \Ylm^\ast(\rha) \diff \ra \\
&= \frac{(2\ell+1)!!}{\lambda^{\ell}} Q_{Ln}^\alpha a_n \int_0^{R_\alpha} (r_\alpha^{2}-R_\alpha^2)^n r_\alpha^{\ell+2} i_\ell(\lambda r_\alpha) \diff r_\alpha \\
&= \frac{(2\ell+1)!!}{\lambda^{\ell}} Q_{Ln}^\alpha a_n (-2)^n n! R_\alpha^{\ell +n+2} \frac{i_{\ell +n+1}(\lambda R_\alpha)}{\lambda^{n+1}}\, ,
\end{align}
and thus,
\begin{equation}\label{coeff_QL}
Q_{Ln}^\alpha = \qlm[\rhoab] \frac{\lambda^{\ell +n+1}}{a_n (-2)^n n! R_\alpha^{\ell +n+2} (2\ell+1)!! i_{\ell +n+1}(\lambda R_\alpha)}\, .
\end{equation}
Since $A_{\ell n}^\alpha(K)$ and $Q_{Ln}^\alpha$ share the term $a_n (-2)^n n! R_\alpha^{\ell +n+2}$, the term cancels out in the product $Q_{Ln}^\alpha A_{\ell n}^\alpha(K)$ entering~\eqref{rhoab_Fourier},
\begin{equation}\label{Yukawa_factor}
Q_{Ln}^\alpha A_{\ell n}^\alpha (K) = \frac{j_{\ell +n+1}(KR_\alpha)}{K^{n+1}(2\ell+1)!!} \frac{\lambda^{\ell +n+1}}{i_{\ell +n+1}(\lambda R_\alpha)} \qlm[\rhoab]\,,
\end{equation}
and setting $\nu=\ell +n+1$ (in accordance with Sect.~\ref{ssec:fourierconvergence}) this leads to
\begin{equation}\label{rhoab_Fourier_final}
\begin{split}
\rhoab(\qq) = \frac{4\pi}{|\Omega|} \me^{-\iu\qq\cdot\ta} \sum_{L} & (-i)^{\ell} \frac{j_\nu(KR_\alpha)}{K^{\nu-\ell}(2\ell+1)!!} \frac{\lambda^{\nu}}{i_{\nu}(\lambda R_\alpha)}\\ & \qlm[\rhoab]\,  \Ylm(\hat{\qq})\, .
\end{split}
\end{equation}
For $\rhoab(\mathbf{0})$ we take the limit 
\begin{equation}
\lim_{K\to0}\frac{j_{\nu}(KR_\alpha)}{K^{\nu-\ell}} = \lim_{K\to0}\frac{K^{\ell} R_\alpha^{\nu}}{(2\nu+1)!!}\,,
\end{equation}
which is only different from 0 for $\ell=0$ and thus yields
\begin{equation}\label{rhoab0}
\rhoab(\mathbf{0}) = \frac{\sqrt{4\pi}}{|\Omega|}\frac{(\lambda R_\alpha)^{\nu}}{(2\nu+1)!!i_{\nu}(\lambda R_\alpha)} q_{00}^\alpha[\rhoab]\, .
\end{equation}
Due to the condition that the pseudo-charge density has the correct modified multipole moments, the modified multipole moments used for the actual computation of the Fourier coefficients are the ones of the true localized charge density $\rholoc$, $\qlm[\rholoc] = \qlm[\rhoa]-\qlm[\rhoI]$, calculated from the modified multipole moments $\qlm[\rhoa]$~\eqref{MTqlm} and $\qlm[\rhoI]$~\eqref{Iqlm} of $\rhoa$ and $\rhoI$ respectively, in the sphere $\sphere$.

\subsubsection{Smoothness of the Pseudo-Charge Density and Convergence of Its Fourier Series}\label{ssec:fourierconvergence}

In Sect.~\ref{ssec:cPseudoCD}, we have set $a_\eta$ by~\eqref{coefficients_a_eta}, $\nu_\eta=\ell +2\eta$ and $\nu=\ell +n+1$ without having determined $n$ yet. 
Here we motivate our choices and finally determine a proper $n$. 

With our choices for $a_\eta$ and $\nu_\eta$ we have eradicated the sum in ansatz~\eqref{ansatz_rhoab} and derived the much simpler form~\eqref{reduced_rhoab}.
The function 
\begin{equation}
r_\alpha\mapsto(r_\alpha^2-R_\alpha^2)^n
\end{equation}
itself and all its first $n-1$ derivatives with respect to $r_\alpha$ are equal to zero at $r_\alpha=R_\alpha$.
Consequently, we ensure smoothness on the boundary of the sphere,
\begin{equation}\label{smoothness_on_boundary}
\frac{\diff^k}{\diff r_\alpha^k} \rhoab_{L}(r_\alpha) \bigg\vert_{r_\alpha=R_\alpha} = 0 \quad\forall k=0,\ldots,n-1\, .
\end{equation}
Note that this includes localization of $\rhoab$ in $\sphere$ and thus the pseudo-charge density equals the charge density in I (condition 2 in Sect.~\ref{sec:WpschargeM}).

The smoothness of the pseudo-charge density is connected to the convergence properties of its Fourier series. 
Applying the Riemann-Lebesgue lemma for Fourier series to a $n-1$-fold differentiable function in combination with the differentiation rule for a Fourier transform, one can show that the Fourier coefficients for $K\to\infty$ go faster to zero than $\frac{1}{K^{n-1}}$, \textit{i.e.}\ we obtain the fastest convergence of the Fourier series for large $K$ and the convergence becomes the better the larger $n$ is.
If we choose $n$ too large, however, we are left with the small-$K$ Fourier coefficients only, and thus the Fourier series is unbalanced, in the sense that smaller coefficients have a larger weight.
So, ideally, our choice of $n$ is guided by the cut-off of the Fourier series.

For an explicit rule on how to choose $n$, Weinert~\cite{weinert1981solution} discussed the factors $Q_{Ln}^\alpha A_{\ell n}^\alpha(K) / \qlm[\rhoab]$, where in the Coulomb case 
\begin{equation}\label{Coulomb_factor}
Q_{Ln}^\alpha A_{\ell n}^\alpha (K) = \frac{j_{\ell +n+1}(KR_\alpha)}{K^{n+1}(2\ell+1)!!} \frac{(2\ell+2n+3)!!}{R_\alpha^{\ell +n+1}} \qlm[\rhoab]\, .
\end{equation}
His reasoning is based on the zeros of the $K$-dependent function $Q_{Ln}^\alpha A_{\ell n}^\alpha(K) / \qlm[\rhoab]$.  
Viewed as a function of $K$, however, our factor $Q_{Ln}^\alpha A_{\ell n}^\alpha(K) / \qlm[\rhoab]$ differs from Weinert's one only by a multiplicative constant. 
Thus Weinert's arguments apply here as well.
With Prop.~\ref{app:multiplicative_constant} in Appendix~\ref{app} it follows that this multiplicative constant is smaller than 1 for $\lambda>0$. 
We see this confirmed in Fig.~\ref{fig:weinertfactor}, which shows the $KR_\alpha$-dependence of the factor $Q_{Ln}^\alpha A_{\ell n}^\alpha(K)/\qlm[\rhoab]$ in the Yukawa~\eqref{Yukawa_factor}  and Coulomb~\eqref{Coulomb_factor} cases for several combinations of $n$ and angular degree $\ell$---it reveals a smaller amplitude in the Yukawa case. 
In agreement with Ref.~\onlinecite{weinert1981solution}, in Fig.~\ref{fig:weinertfactor} we make two observations:
(i) $Q_{Ln}^\alpha A_{\ell n}^\alpha / \qlm[\rhoab]$ as a function of $KR_\alpha$ has larger oscillations for smaller $n$ and 
(ii) for fixed $n$, the largest contribution to the Fourier series comes from $KR_\alpha$ less than the first zero of the Bessel function $j_{\ell +n+1}$. 
Since we deal with a finite number of $\qq$ vectors, we would like to reduce the oscillations mentioned in (i) by choosing a large $n$. 
On the other hand, due to the cut-off of the Fourier series at some $K_\text{max}$, the factor $Q_{Ln}^\alpha A_{\ell n}^\alpha / \qlm[\rhoab]$ must be small for $K>K_\text{max}$, which limits $\ell+n+1$ to a certain value, since the first zero is pushed towards infinity for growing $\ell+n+1$, as can be seen from a comparison between the pink and yellow, or the blue and purple lines in Fig.~\ref{fig:weinertfactor}. 
From this arises  Weinert's  criterion for choosing $n(\ell)$, which we adopt here:
\begin{itemize}
\item Choose $\nu\in\mathbb{N}$ such that the first zero of $j_\nu(z)$ is approximately equal to $(KR_\alpha)_\text{max}$.\label{criterion}
\item Then $n(\ell)$ is fixed by the relation $\nu=\ell +n+1$. 
\end{itemize}
Note that in this method $\ell$ is compensated by $n$ in such a way that $\nu$ is \textit{de facto} not depending on $\ell$. 
Since the discretization of $\qq$ vectors and the muffin-tin radii usually do not change over the course of the self-consistent-field iteration, the terms in~\eqref{rhoab_Fourier_final} depending on $\nu$ need to be computed just once.

\begin{figure}[htp]
\centering{
\includegraphics[width=0.5\textwidth]{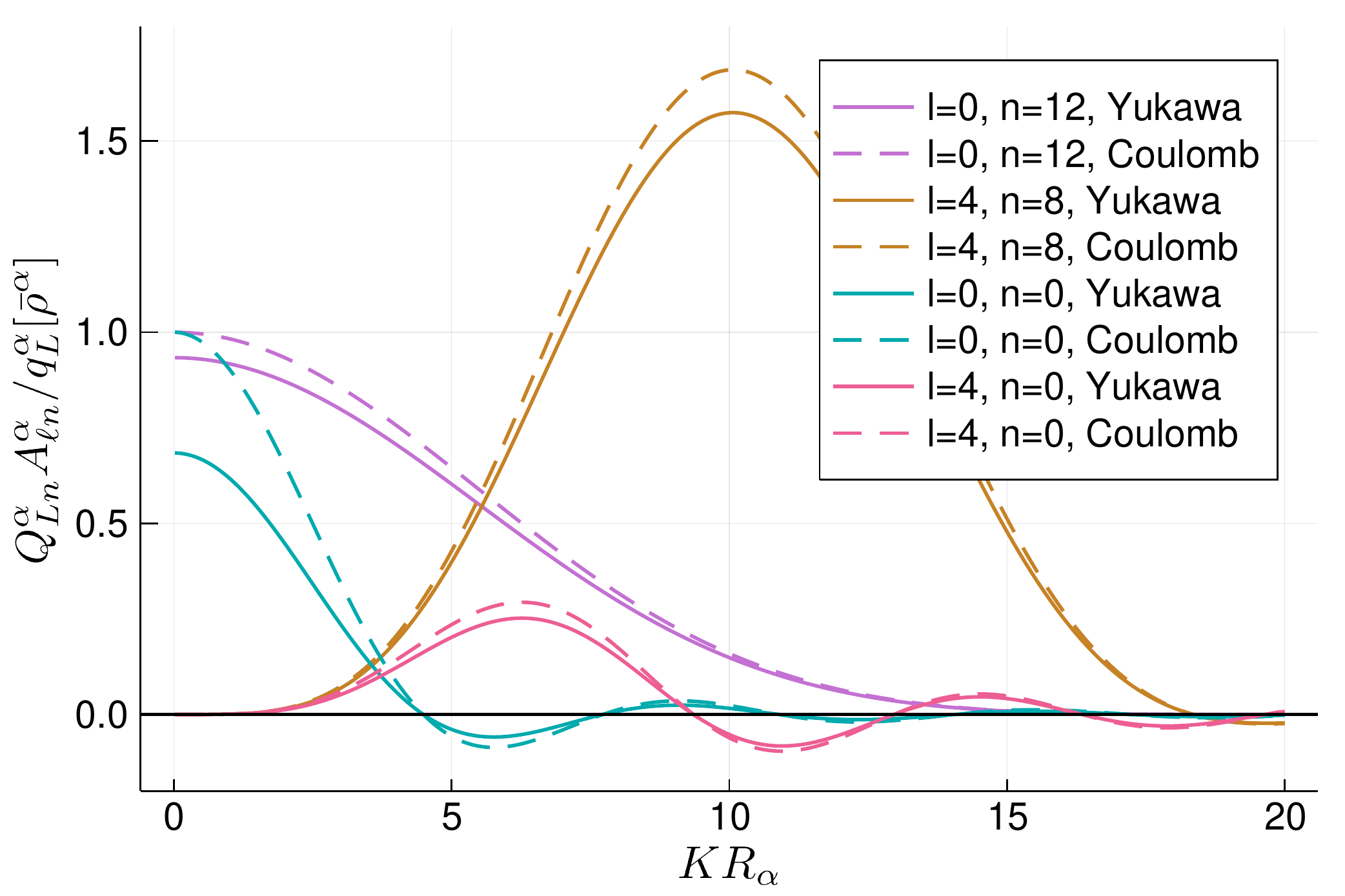}
\caption{ The factor $Q_{Ln}^\alpha A_{\ell n}^\alpha / \qlm[\rhoab]$ for several $(\ell,n)$ in the Yukawa and Coulomb cases with $R_\alpha=1$ and $\lambda=2$.  }\label{fig:weinertfactor}
}
\end{figure}
\subsection{Algorithm: Construction of Yukawa Potential}\label{sec:Algorithm}

Algorithm~\protect\ref{alg:YukawaPotential} summarizes the construction of the Yukawa potential derived in this paper. 

\begin{figure} 
\begin{algorithm}[H] 
  \caption{Bulk-Case Yukawa Potential}
  \label{alg:YukawaPotential}
  \begin{algorithmic}[1]
    \Require charge density $\rho$, integer $\nu$ chosen as described on page~\protect\pageref{criterion} and preconditioning parameter $\lambda$.
    \Ensure Yukawa potential $\preV$ solving the modified Helmholtz equation~\protect\eqref{mod_Helmholtz_eq} with periodic boundary conditions. 
    \Statex \textbf{Pseudo-charge density} $\pseudo\leftarrow\rho$:     \State Modified multipole moments $\qlm[\rhoI]$ of the interstitial charge density in $\sphere$. \Comment{Eq.~\protect\eqref{Iqlm}}
    \State Modified multipole moments $\qlm[\rhoa]$ of the muffin-tin charge density in $\sphere$. \Comment{Eq.~\protect\eqref{MTqlm}}
    \State Modified multipole moments $\qlm[\rhoab] = \qlm[\rhoa] - \qlm[\rhoI]$ of $\rhoab$.
    \State Sphere-localized part $\rhoab(\qq)$ of the pseudo-charge density. \Statex\Comment{Eqs.~\protect\eqref{rhoab_Fourier_final} and~\protect\eqref{rhoab0}}
    \State Pseudo-charge density $\rhot(\qq)$. \Comment{Eq.~\protect\eqref{rhot_Fourier}}
    \Statex \textbf{Interstitial potential} $\preV^\text{I}$ $\leftarrow\pseudo$:
    \State Interstitial potential $\preV^\text{I}(\qq)$. \Comment{Eq.~\protect\eqref{preVI_Fourier}}
    \Statex \textbf{Muffin-tin potential} $\preV^\text{MT}\leftarrow\rho^\text{MT},\preV^\text{I}$:
    \State Boundary terms $\VIR$ of muffin-tin potential. \Statex\Comment{Eq.~\protect\eqref{spherical_harmonics_boundary_term}}
    \State Radial parts $V_{L}^\alpha(r_\alpha)$ of muffin-tin potential. \Comment{Eq.~\protect\eqref{computable_MTPot}}
  \end{algorithmic}
\end{algorithm}
\end{figure}

In the case that Weinert's method is available as an implemented algorithm then only relatively few changes are necessary to make it available for the solution of the modified Helmholtz equation.
The changes to be made in practice are limited to the following: 
The slightly different radial behavior of the Green function leads to small changes in the multipole moments of the interstitial and muffin-tin charge densities in each sphere, $\qlm[\rhoI]$~\eqref{Iqlm} and $\qlm[\rhoa]$~\eqref{MTqlm} respectively, and in the  Fourier components of the pseudo-charge density $\bar{\rho}^\alpha(\qq)$~\eqref{rhoab_Fourier_final} and~\eqref{rhoab0}. 
The integer $\nu$ in the formula for the pseudo-charge density's Fourier components, which determines the convergence of the Fourier series, is chosen exactly the same as in Weinert's original method.
The interstitial potential~\eqref{preVI_Fourier} undergoes changes indirectly through the pseudo-charge density and directly by the prefactor $\frac{4\pi}{K^2+\lambda^2}$ that substitutes $\frac{4\pi}{K^2}$. 
Since the $\qq=\nullvec$-term is well-defined, it is not set to a constant as in the original method.
The muffin-tin potential is affected only in its radial dependence in both the boundary and the source contribution of~\eqref{computable_MTPot}. Basically, the polynomials $r^\ell$ and $1/r^{\ell+1}$ in these quantities are substituted by the modified spherical Bessel functions $i_\ell(\lambda r)$ and $k_\ell(\lambda r)$, respectively, and the prefactor $\frac{4\pi}{2\ell+1}$ is substituted by $4\pi\lambda$.
The interstitial potential on the boundary of the spheres, $\VIR$~\eqref{spherical_harmonics_boundary_term}, only changes indirectly through the changed values of $\preV^\text{I}(\qq)$.
\section{Conclusion and Outlook}\label{sec:conclusion}

We have presented a general method for solving the modified Helmholtz equation for a 3D-periodic system of charge densities not restricted by any shape approximation of three-dimensional volume. The three-dimensional domain is typically decomposed into non-overlapping atom-centered spheres (the muffin-tin region) and the space between these spheres (the interstitial region). The solution, the Yukawa potential, suffices 3D-periodic boundary conditions. Since the Yukawa differential equation is similar to the Poisson equation, we leveraged our derivations on the work of Weinert~\cite{weinert1981solution}. Our work can be considered an extension of Weinert's work determining instead of the bare Coulomb potential with zero screening, the screened Coulomb potential with finite screening length $1/\lambda$. Like Weinert's pseudo-charge  method, our extension is based on the concept of the non-uniqueness of the multipole expansion as well as  on the Dirichlet boundary value problem applied to a sphere.  
The difference between the modified Helmholtz equation and the Poisson equation lies solely in the radial behavior and thus the homogeneous solutions to the radial part of the differential equation are modified spherical Bessel functions instead of polynomial functions. The consequence is a different radial behavior of the Green function, resulting in the screening of the potential, which is now expanded in modified multipole moments, and this in turn affects the pseudo-charge density. Furthermore, the modification of the multipole moments implies that the modified monopole is not connected anymore to the total charge. We have shown that Weinert's convergence analysis of an absolutely and uniformly convergent Fourier series of the pseudo-charge density is transferred to the modified pseudo-charge density and thus  we  can therefore best choose the same integer parameters for convergence. 
Finally we layed out the minor changes necessary to change an  implemented method for solving the Poisson equation available to an implementation for solving  the modified Helmholtz equation.

Considering that Weinert's pseudo-charge method has become the standard method for calculating the electrostatic potential without shape approximation in all-electron band structure methods for applications of periodic solids, and since we have extended it to a modified pseudo-charge method with only minor modifications involving some radial integrals, this now allows to treat the screened Coulomb potentials without shape approximation described by the modified Helmholtz equation with all-electron methods. The screened Coulomb or Yukawa potential typically occurs in single particle or mean-field theories to the problem of many charged particles, where all charged particles contribute effectively to the screening of the bare Coulomb potential.
\section{Acknowledgments}

We would like to thank Gregor Michalicek, Jan Winkelmann and Rudolf Zeller for fruitful discussions and help with conceptual and computational matters. This work has been supported by a JARA-HPC seed-fund project and by the MaX Center of Excellence funded by the EU through the H2020-EINFRA-2015-1 project: GA 676598.
The authors gratefully acknowledge the computing time granted by the JARA-HPC Vergabegremium and provided on the JARA-HPC partition’s part of the supercomputer JURECA~\cite{jureca} at Forschungszentrum J\"ulich.
\appendix
\section{}\label{app}

\begin{prop}\label{app:1}
Let $R>0$, $\kappa>0$ and $f_\ell$ be either the spherical Bessel function of the first kind, $f_\ell=j_\ell$, or the modified spherical Bessel function of the first kind, $f_\ell=i_\ell$. Then 
\begin{equation*}\label{bessel_integration_property}
\int_0^R (r^2-R^2)^n r^{\ell+2} f_\ell(\kappa r) \diff r = (-2)^n n! R^{\ell+n+2} \frac{f_{\ell+n+1}(\kappa R)}{\kappa^{n+1}}
\end{equation*}
holds for all $n\in\mathbb{N}_0$ and all $\ell\in\mathbb{N}_0$.
\end{prop}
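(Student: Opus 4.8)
The plan is to prove the identity by induction on $n$, keeping $\ell$ universally quantified at every stage and treating $f_\ell$ abstractly so that the cases $f_\ell=j_\ell$ and $f_\ell=i_\ell$ are dispatched simultaneously. The only properties I would invoke are the recurrence~\eqref{bessel_identity}, satisfied by both families, and the fact that $j_\ell$ and $i_\ell$ stay finite at the origin. First I would rescale~\eqref{bessel_identity} to the argument $\kappa r$: the chain rule contributes one factor of $\kappa$, so that
\begin{equation*}
r^{\ell+2} f_\ell(\kappa r) = \frac{1}{\kappa}\frac{\diff}{\diff r}\bigl(r^{\ell+2} f_{\ell+1}(\kappa r)\bigr)\,.
\end{equation*}
This exhibits the integrand as an exact derivative and is the engine for every step below.

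For the base case $n=0$ I would integrate this relation over $[0,R]$ and apply the fundamental theorem of calculus, obtaining $\tfrac{1}{\kappa}\bigl[r^{\ell+2} f_{\ell+1}(\kappa r)\bigr]_0^R$. The lower endpoint drops because the factor $r^{\ell+2}$ vanishes while $f_{\ell+1}$ remains bounded at the origin, leaving $\tfrac{1}{\kappa}R^{\ell+2}f_{\ell+1}(\kappa R)$, which is exactly the asserted right-hand side at $n=0$.

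For the inductive step I would integrate by parts with $u=(r^2-R^2)^n$ and $\diff v = r^{\ell+2} f_\ell(\kappa r)\,\diff r$, taking $v=\tfrac{1}{\kappa}r^{\ell+2} f_{\ell+1}(\kappa r)$ from the rescaled recurrence. The boundary term vanishes at both ends: at $r=R$ because $(r^2-R^2)^n=0$ for $n\ge 1$, and at $r=0$ because of the $r^{\ell+2}$ factor. Since $\tfrac{\diff}{\diff r}(r^2-R^2)^n = 2nr(r^2-R^2)^{n-1}$, absorbing the stray $r$ into $r^{\ell+2}$ regroups the surviving integral into precisely the form of the proposition with $n$ lowered to $n-1$ and $\ell$ raised to $\ell+1$:
\begin{align*}
&\int_0^R (r^2-R^2)^n r^{\ell+2} f_\ell(\kappa r)\,\diff r \\
&\qquad= -\frac{2n}{\kappa}\int_0^R (r^2-R^2)^{n-1} r^{(\ell+1)+2} f_{\ell+1}(\kappa r)\,\diff r\,.
\end{align*}
This is the point at which it is essential that the hypothesis has been carried for every $\ell$, since it must now be applied at index $\ell+1$.

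The rest is bookkeeping of the prefactor, which I do not expect to be an obstacle: the factor $-2n/\kappa$ produced by the integration by parts multiplies the hypothesis's constant $(-2)^{n-1}(n-1)!/\kappa^{n}$, and the identity $-2n\,(-2)^{n-1}(n-1)! = (-2)^n n!$ collapses this to $(-2)^n n!/\kappa^{n+1}$, while the power $R^{\ell+n+2}$ and the Bessel index $\ell+n+1$ already match. The only points that require genuine care---and the closest thing to a real obstacle---are structural rather than computational: verifying that the recurrence and the origin-regularity hold identically for $j_\ell$ and $i_\ell$ so that no case distinction is ever needed, and organizing the induction with $\ell$ free so that the shift $\ell\mapsto\ell+1$ forced by the integration by parts is licensed by the hypothesis.
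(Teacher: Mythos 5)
Your proposal is correct and follows essentially the same route as the paper's own proof: induction on $n$ with $\ell$ kept free, the base case from the rescaled differentiation identity~\eqref{bessel_identity}, and the inductive step via integration by parts with $u=(r^2-R^2)^n$ and $v'(r)=r^{\ell+2}f_\ell(\kappa r)$, reducing to the case $(n-1,\ell+1)$. Your explicit attention to the vanishing boundary terms and to the fact that the hypothesis must be quantified over all $\ell$ makes points the paper leaves implicit, but the argument is the same.
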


\begin{proof}[Proof by mathematical induction on $n$]
The base case $n=0$ follows immediately from the differentiation property~\eqref{bessel_identity} of the functions $j_\ell$ and $i_\ell$, by
\begin{align*}
    \int_0^{R} r^{\ell+2} f_\ell(\kappa r) \diff r
    &=\kappa^{-\ell-3}\int_0^{\kappa R} r^{\ell+2} f_\ell(r) \diff r \\
    &=\kappa^{-1} R^{\ell+2} f_{\ell+1}(\kappa R)\,.
\end{align*}
The induction step uses the property for $n-1$ and $\ell+1$, and partial integration with $u(r)=(r^2-R^2)^n$ and $v^\prime(r)=r^{\ell+2}f_\ell(\kappa r)$, to derive the statement for $n$ and $\ell$. 
Let 
\[F(n,\ell)= \int_0^R (r^2-R^2)^n r^{\ell+2} f_\ell(\kappa r) \diff r\]
be the left-hand side of~\eqref{bessel_integration_property}.
Then
\begin{align*}
    F(n,\ell)
    &=[(r^2-R^2)^n \kappa^{-1} r^{\ell+2} f_{\ell+1}(\kappa r)]_0^{R} \nonumber \\
    &- \int_0^{R} n (r^2-R^2)^{n-1} 2r \kappa^{-1} r^{\ell+2} f_{\ell+1}(\kappa r) \diff r \\
    &= -2n\kappa^{-1} F(n-1,\ell+1) \\
    &= (-2)^n n! R^{\ell+n+2} \frac{f_{\ell+n+1}(\kappa R)}{\kappa^{n+1}}\, ,
\end{align*}
where we used the induction hypothesis in the last equation, and thus the proposition follows.
\end{proof}

\begin{prop}\label{app:multiplicative_constant}
For nonnegative $\lambda$ and $R$, and $\nu\in\mathbb{N}$
\begin{equation*}
\frac{i_\nu(\lambda R)}{\lambda^\nu}\geq \frac{R^\nu}{(2\nu+1)!!}
\end{equation*}
holds, with equality if and only if $\lambda=0$ or $R=0$.
\end{prop}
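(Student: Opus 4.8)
The plan is to reduce the inequality to the nonnegativity of the coefficients in the Maclaurin series of $i_\nu$. The single fact I would invoke is the power series
\begin{equation*}
i_\nu(x) = \sum_{k=0}^\infty \frac{1}{2^k\,k!}\,\frac{x^{\nu+2k}}{(2\nu+2k+1)!!}\,,
\end{equation*}
which is available from the handbook references already cited; it follows from the relation $i_\nu(x)=\iu^{-\nu}j_\nu(\iu x)$, under which the alternating signs in the series for $j_\nu$ all turn positive (for $\nu=0$ this is just $i_0(x)=\sinh(x)/x$). The crucial feature for us is that every coefficient in this series is strictly positive.

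With the series in hand, the core of the argument is short. First I would set $x=\lambda R$ and divide by $\lambda^\nu$ to obtain
\begin{equation*}
\frac{i_\nu(\lambda R)}{\lambda^\nu} = \frac{R^\nu}{(2\nu+1)!!} + \sum_{k=1}^\infty \frac{1}{2^k\,k!}\,\frac{\lambda^{2k}R^{\nu+2k}}{(2\nu+2k+1)!!}\,.
\end{equation*}
The $k=0$ term is precisely the right-hand side of the claim, and since $\lambda,R\geq0$ every term in the remaining sum is nonnegative; the asserted inequality then follows at once.

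For the equality statement I would argue that, because the factors $2^{-k}/k!$ and the double factorials are all positive, the tail $\sum_{k\geq1}$ vanishes if and only if $\lambda^{2k}R^{\nu+2k}=0$ for every $k\geq1$, that is, if and only if $\lambda R=0$, which means $\lambda=0$ or $R=0$. I would also flag the mild abuse of notation at $\lambda=0$: there the quotient $i_\nu(\lambda R)/\lambda^\nu$ is to be read as the limit $\lambda\to0$, and the leading term of the series shows this limit equals $R^\nu/(2\nu+1)!!$, so equality does hold; the symmetric remark handles $R=0$.

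The only genuine obstacle is pinning down the series with its positive coefficients and justifying the term-by-term rearrangement, which is legitimate since the series converges absolutely for all $x$. Once that groundwork is laid, no estimates and no induction are needed, in contrast to the proof of Prop.~\ref{app:1}.
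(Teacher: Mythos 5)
Your proof is correct and is essentially the paper's own argument: both derive the all-positive-coefficient Maclaurin series for $i_\nu$ from $i_\nu(x)=\iu^{-\nu}j_\nu(\iu x)$, identify the leading term with the right-hand side $R^\nu/(2\nu+1)!!$, and conclude that the remaining nonnegative tail vanishes exactly when $\lambda=0$ or $R=0$. The differences are cosmetic only—you write the coefficients with double factorials where the paper uses Pochhammer symbols $(\nu+\tfrac{3}{2})_s$, and you additionally (and reasonably) flag that at $\lambda=0$ the quotient is to be read as a limit, a point the paper leaves implicit.
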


\begin{proof}
Since $i_\nu(x)=\iu^{-\nu}j_\nu(\iu x)$ and $j_\nu$ has the expansion~\cite{arfken2013mathematical}
\begin{equation*}
j_\nu(x) = \frac{x^\nu}{(2\nu+1)!!}\sum_{s=0}^\infty \frac{(-1)^s}{s!(\nu+\frac{3}{2})_s} \left(\frac{x}{2}\right)^{2s}\, ,
\end{equation*}
where $(\cdot)_s$ is the Pochhammer symbol defined for general $a\in\mathbb{R}$ and $s\in\mathbb{N}_0$ by
\begin{equation*}
(a)_0 = 1\,,\quad (a)_s = a (a+1) \cdots (a+s-1)\,,
\end{equation*}
$i_\nu$ can be expanded by
\begin{equation*}
i_\nu(x) = \frac{x^\nu}{(2\nu+1)!!}\sum_{s=0}^\infty \frac{1}{s!(\nu+\frac{3}{2})_s} \left(\frac{x}{2}\right)^{2s}\, .
\end{equation*}
$\frac{i_\nu(\lambda R)}{\lambda^\nu}$ thus becomes
\begin{equation*}
\frac{i_\nu(\lambda R)}{\lambda^\nu} = \frac{R^\nu}{(2\nu+1)!!}\sum_{s=0}^\infty \frac{1}{s!(\nu+\frac{3}{2})_s} \left(\frac{\lambda R}{2}\right)^{2s}
\end{equation*}
and it remains to show that 
\begin{equation*}
    \sum_{s=0}^\infty \frac{1}{s!(\nu+\frac{3}{2})_s} \left(\frac{\lambda R}{2}\right)^{2s}\geq 1\,.
\end{equation*}
The first term of the sum for $s=0$ is always $1$ irrespective of the values of $\lambda$ and $R$. 
The other summed terms for $s>0$ are positive, if both $\lambda$ and $R$ are positive, and zero, if one of the two variables is zero.
\end{proof}
\smallskip
\section*{Data Availibility}

Data sharing is not applicable to this article as no new data were created or analyzed in this study.
\bibliography{YukawaPotential}

\begin{thebibliography}{17}%
\makeatletter
\providecommand \@ifxundefined [1]{%
 \@ifx{#1\undefined}
}%
\providecommand \@ifnum [1]{%
 \ifnum #1\expandafter \@firstoftwo
 \else \expandafter \@secondoftwo
 \fi
}%
\providecommand \@ifx [1]{%
 \ifx #1\expandafter \@firstoftwo
 \else \expandafter \@secondoftwo
 \fi
}%
\providecommand \natexlab [1]{#1}%
\providecommand \enquote  [1]{``#1''}%
\providecommand \bibnamefont  [1]{#1}%
\providecommand \bibfnamefont [1]{#1}%
\providecommand \citenamefont [1]{#1}%
\providecommand \href@noop [0]{\@secondoftwo}%
\providecommand \href [0]{\begingroup \@sanitize@url \@href}%
\providecommand \@href[1]{\@@startlink{#1}\@@href}%
\providecommand \@@href[1]{\endgroup#1\@@endlink}%
\providecommand \@sanitize@url [0]{\catcode `\\12\catcode `\$12\catcode
  `\&12\catcode `\#12\catcode `\^12\catcode `\_12\catcode `\%12\relax}%
\providecommand \@@startlink[1]{}%
\providecommand \@@endlink[0]{}%
\providecommand \url  [0]{\begingroup\@sanitize@url \@url }%
\providecommand \@url [1]{\endgroup\@href {#1}{\urlprefix }}%
\providecommand \urlprefix  [0]{URL }%
\providecommand \Eprint [0]{\href }%
\providecommand \doibase [0]{http://dx.doi.org/}%
\providecommand \selectlanguage [0]{\@gobble}%
\providecommand \bibinfo  [0]{\@secondoftwo}%
\providecommand \bibfield  [0]{\@secondoftwo}%
\providecommand \translation [1]{[#1]}%
\providecommand \BibitemOpen [0]{}%
\providecommand \bibitemStop [0]{}%
\providecommand \bibitemNoStop [0]{.\EOS\space}%
\providecommand \EOS [0]{\spacefactor3000\relax}%
\providecommand \BibitemShut  [1]{\csname bibitem#1\endcsname}%
\let\auto@bib@innerbib\@empty
\bibitem [{\citenamefont {Yukawa}(1935)}]{yukawa1935interaction}%
  \BibitemOpen
  \bibfield  {author} {\bibinfo {author} {\bibfnamefont {H.}~\bibnamefont
  {Yukawa}},\ }\bibfield  {title} {\enquote {\bibinfo {title} {On the
  interaction of elementary particles. i},}\ }\href {\doibase
  10.11429/ppmsj1919.17.0_48} {\bibfield  {journal} {\bibinfo  {journal}
  {Proceedings of the Physico-Mathematical Society of Japan. 3rd Series}\
  }\textbf {\bibinfo {volume} {17}},\ \bibinfo {pages} {48--57} (\bibinfo
  {year} {1935})}\BibitemShut {NoStop}%
\bibitem [{\citenamefont {H{\"u}ckel}\ and\ \citenamefont
  {Debye}(1923)}]{Debye1923}%
  \BibitemOpen
  \bibfield  {author} {\bibinfo {author} {\bibfnamefont {E.}~\bibnamefont
  {H{\"u}ckel}}\ and\ \bibinfo {author} {\bibfnamefont {P.}~\bibnamefont
  {Debye}},\ }\bibfield  {title} {\enquote {\bibinfo {title} {The theory of
  electrolytes. i. lowering of freezing point and related phenomena},}\
  }\href@noop {} {\bibfield  {journal} {\bibinfo  {journal} {Phys. Z}\ }\textbf
  {\bibinfo {volume} {24}},\ \bibinfo {pages} {185--206} (\bibinfo {year}
  {1923})}\BibitemShut {NoStop}%
\bibitem [{\citenamefont {Thomas}(1927)}]{Thomas1927}%
  \BibitemOpen
  \bibfield  {author} {\bibinfo {author} {\bibfnamefont {L.}~\bibnamefont
  {Thomas}},\ }\bibfield  {title} {\enquote {\bibinfo {title} {The calculation
  of atomic fields},}\ }\href {\doibase 10.1017/S0305004100011683} {\bibfield
  {journal} {\bibinfo  {journal} {Mathematical Proceedings of the Cambridge
  Philosophical Society}\ }\textbf {\bibinfo {volume} {23}},\ \bibinfo {pages}
  {542–548} (\bibinfo {year} {1927})}\BibitemShut {NoStop}%
\bibitem [{\citenamefont {Fermi}(1927)}]{Fermi1927}%
  \BibitemOpen
  \bibfield  {author} {\bibinfo {author} {\bibfnamefont {E.}~\bibnamefont
  {Fermi}},\ }\bibfield  {title} {\enquote {\bibinfo {title} {Eine statistische
  methode zur bestimmung einiger eigenschaften des atoms und ihre anwendung auf
  die theorie des periodischen systems der elemente},}\ }\href {\doibase
  10.1007/BF01351576} {\bibfield  {journal} {\bibinfo  {journal} {Zeitschrift
  für Physik}\ }\textbf {\bibinfo {volume} {48}},\ \bibinfo {pages} {73–79}
  (\bibinfo {year} {1927})}\BibitemShut {NoStop}%
\bibitem [{\citenamefont {Hohenberg}\ and\ \citenamefont
  {Kohn}(1964)}]{hohenberg1964inhomogeneous}%
  \BibitemOpen
  \bibfield  {author} {\bibinfo {author} {\bibfnamefont {P.}~\bibnamefont
  {Hohenberg}}\ and\ \bibinfo {author} {\bibfnamefont {W.}~\bibnamefont
  {Kohn}},\ }\bibfield  {title} {\enquote {\bibinfo {title} {Inhomogeneous
  electron gas},}\ }\href {\doibase 10.1103/PhysRev.136.B864} {\bibfield
  {journal} {\bibinfo  {journal} {Phys. Rev.}\ }\textbf {\bibinfo {volume}
  {136}},\ \bibinfo {pages} {B864--B871} (\bibinfo {year} {1964})}\BibitemShut
  {NoStop}%
\bibitem [{\citenamefont {Kohn}\ and\ \citenamefont
  {Sham}(1965)}]{kohn1965density}%
  \BibitemOpen
  \bibfield  {author} {\bibinfo {author} {\bibfnamefont {W.}~\bibnamefont
  {Kohn}}\ and\ \bibinfo {author} {\bibfnamefont {L.}~\bibnamefont {Sham}},\
  }\bibfield  {title} {\enquote {\bibinfo {title} {Density function theory},}\
  }\href@noop {} {\bibfield  {journal} {\bibinfo  {journal} {J. Phys. Rev}\
  }\textbf {\bibinfo {volume} {140}},\ \bibinfo {pages} {A1133--A1138}
  (\bibinfo {year} {1965})}\BibitemShut {NoStop}%
\bibitem [{\citenamefont {{Tran}}\ and\ \citenamefont
  {{Blaha}}(2011)}]{tran2011screened}%
  \BibitemOpen
  \bibfield  {author} {\bibinfo {author} {\bibfnamefont {F.}~\bibnamefont
  {{Tran}}}\ and\ \bibinfo {author} {\bibfnamefont {P.}~\bibnamefont
  {{Blaha}}},\ }\bibfield  {title} {\enquote {\bibinfo {title} {{Implementation
  of screened hybrid functionals based on the Yukawa potential within the LAPW
  basis set}},}\ }\href {\doibase 10.1103/PhysRevB.83.235118} {\bibfield
  {journal} {\bibinfo  {journal} {Phys. Rev. B}\ }\textbf {\bibinfo {volume}
  {83}},\ \bibinfo {eid} {235118} (\bibinfo {year} {2011})}\BibitemShut
  {NoStop}%
\bibitem [{\citenamefont {Massidda}, \citenamefont {Posternak},\ and\
  \citenamefont {Baldereschi}(1993)}]{Massidda:93}%
  \BibitemOpen
  \bibfield  {author} {\bibinfo {author} {\bibfnamefont {S.}~\bibnamefont
  {Massidda}}, \bibinfo {author} {\bibfnamefont {M.}~\bibnamefont {Posternak}},
  \ and\ \bibinfo {author} {\bibfnamefont {A.}~\bibnamefont {Baldereschi}},\
  }\bibfield  {title} {\enquote {\bibinfo {title} {Hartree-fock lapw approach
  to the electronic properties of periodic systems},}\ }\href {\doibase
  10.1103/PhysRevB.48.5058} {\bibfield  {journal} {\bibinfo  {journal} {Phys.
  Rev. B}\ }\textbf {\bibinfo {volume} {48}},\ \bibinfo {pages} {5058--5068}
  (\bibinfo {year} {1993})}\BibitemShut {NoStop}%
\bibitem [{\citenamefont {Weinert}(1981)}]{weinert1981solution}%
  \BibitemOpen
  \bibfield  {author} {\bibinfo {author} {\bibfnamefont {M.}~\bibnamefont
  {Weinert}},\ }\bibfield  {title} {\enquote {\bibinfo {title} {Solution of
  poisson's equation: Beyond ewald‐type methods},}\ }\href {\doibase
  10.1063/1.524800} {\bibfield  {journal} {\bibinfo  {journal} {Journal of
  Mathematical Physics}\ }\textbf {\bibinfo {volume} {22}},\ \bibinfo {pages}
  {2433--2439} (\bibinfo {year} {1981})}\BibitemShut {NoStop}%
\bibitem [{\citenamefont {Eyert}(2013)}]{eyert2013planewave}%
  \BibitemOpen
  \bibfield  {author} {\bibinfo {author} {\bibfnamefont {V.}~\bibnamefont
  {Eyert}},\ }\enquote {\bibinfo {title} {The plane-wave based full-potential
  asw method},}\ in\ \href {\doibase 10.1007/978-3-642-25864-0_4} {\emph
  {\bibinfo {booktitle} {The Augmented Spherical Wave Method: A Comprehensive
  Treatment}}},\ \bibinfo {series and number} {Lecture Notes in Physics}\
  (\bibinfo  {publisher} {Springer Berlin Heidelberg},\ \bibinfo {address}
  {Berlin, Heidelberg},\ \bibinfo {year} {2013})\ pp.\ \bibinfo {pages}
  {113--172}\BibitemShut {NoStop}%
\bibitem [{\citenamefont {Drittler}(1991)}]{drittler1991thesis}%
  \BibitemOpen
  \bibfield  {author} {\bibinfo {author} {\bibfnamefont {B.}~\bibnamefont
  {Drittler}},\ }\emph {\bibinfo {title} {KKR-Greensche Funktionsmethode
  f{\"u}r das volle Zellpotential}},\ \href@noop {} {Ph.D. thesis},\ \bibinfo
  {school} {RWTH Aachen} (\bibinfo {year} {1991})\BibitemShut {NoStop}%
\bibitem [{\citenamefont {Wimmer}\ \emph {et~al.}(1981)\citenamefont {Wimmer},
  \citenamefont {Krakauer}, \citenamefont {Weinert},\ and\ \citenamefont
  {Freeman}}]{wimmer1981fullpotential}%
  \BibitemOpen
  \bibfield  {author} {\bibinfo {author} {\bibfnamefont {E.}~\bibnamefont
  {Wimmer}}, \bibinfo {author} {\bibfnamefont {H.}~\bibnamefont {Krakauer}},
  \bibinfo {author} {\bibfnamefont {M.}~\bibnamefont {Weinert}}, \ and\
  \bibinfo {author} {\bibfnamefont {A.~J.}\ \bibnamefont {Freeman}},\
  }\bibfield  {title} {\enquote {\bibinfo {title} {Full-potential
  self-consistent linearized-augmented-plane-wave method for calculating the
  electronic structure of molecules and surfaces: ${\mathrm{o}}_{2}$
  molecule},}\ }\href {\doibase 10.1103/PhysRevB.24.864} {\bibfield  {journal}
  {\bibinfo  {journal} {Phys. Rev. B}\ }\textbf {\bibinfo {volume} {24}},\
  \bibinfo {pages} {864--875} (\bibinfo {year} {1981})}\BibitemShut {NoStop}%
\bibitem [{fle()}]{fleur}%
  \BibitemOpen
  \href@noop {} {\enquote {\bibinfo {title} {{\texttt{fleur}: The J\"{u}lich
  FLAPW Code Family}},}\ }\bibinfo {howpublished}
  {\url{http://www.flapw.de/}}\BibitemShut {NoStop}%
\bibitem [{\citenamefont {Abramowitz}\ and\ \citenamefont
  {Stegun}(1964)}]{abramowitz2008handbook}%
  \BibitemOpen
  \bibfield  {author} {\bibinfo {author} {\bibfnamefont {M.}~\bibnamefont
  {Abramowitz}}\ and\ \bibinfo {author} {\bibfnamefont {I.~A.}\ \bibnamefont
  {Stegun}},\ }\href@noop {} {\emph {\bibinfo {title} {Handbook of Mathematical
  Functions with Formulas, Graphs, and Mathematical Tables}}},\ \bibinfo
  {edition} {ninth dover printing, tenth gpo printing}\ ed.\ (\bibinfo
  {publisher} {Dover},\ \bibinfo {address} {New York},\ \bibinfo {year}
  {1964})\BibitemShut {NoStop}%
\bibitem [{\citenamefont {Arfken}, \citenamefont {Weber},\ and\ \citenamefont
  {Harris}(2013)}]{arfken2013mathematical}%
  \BibitemOpen
  \bibfield  {author} {\bibinfo {author} {\bibfnamefont {G.}~\bibnamefont
  {Arfken}}, \bibinfo {author} {\bibfnamefont {H.}~\bibnamefont {Weber}}, \
  and\ \bibinfo {author} {\bibfnamefont {F.}~\bibnamefont {Harris}},\ }\href
  {https://books.google.de/books?id=qLFo\_Z-PoGIC} {\emph {\bibinfo {title}
  {Mathematical Methods for Physicists: A Comprehensive Guide}}}\ (\bibinfo
  {publisher} {Elsevier Science},\ \bibinfo {year} {2013})\BibitemShut
  {NoStop}%
\bibitem [{\citenamefont {Jackson}(1999)}]{jackson1999classical}%
  \BibitemOpen
  \bibfield  {author} {\bibinfo {author} {\bibfnamefont {J.~D.}\ \bibnamefont
  {Jackson}},\ }\href@noop {} {\emph {\bibinfo {title} {Classical
  Electrodynamics}}},\ \bibinfo {edition} {3rd}\ ed.\ (\bibinfo  {publisher}
  {Wiley},\ \bibinfo {address} {New York, {NY}},\ \bibinfo {year}
  {1999})\BibitemShut {NoStop}%
\bibitem [{\citenamefont {{J\"{u}lich Supercomputing Centre}}(2018)}]{jureca}%
  \BibitemOpen
  \bibfield  {author} {\bibinfo {author} {\bibnamefont {{J\"{u}lich
  Supercomputing Centre}}},\ }\bibfield  {title} {\enquote {\bibinfo {title}
  {{JURECA: Modular supercomputer at J\"{u}lich Supercomputing Centre}},}\
  }\href {\doibase 10.17815/jlsrf-4-121-1} {\bibfield  {journal} {\bibinfo
  {journal} {Journal of large-scale research facilities}\ }\textbf {\bibinfo
  {volume} {4}} (\bibinfo {year} {2018}),\ 10.17815/jlsrf-4-121-1}\BibitemShut
  {NoStop}%
\end{thebibliography}%

\end{document}